\newtheorem{theorem}{Theorem}[section]
\newtheorem{remark}[theorem]{Remark}
\newcommand{\argmin}{\mathop{\rm arg\min}}
\newcommand{\ve}{\mathbf{e}}
\newcommand{\vv}{\mathbf{v}}
\newcommand{\vw}{\mathbf{w}}
\newcommand{\vg}{\mathbf{g}}
\def\I{\mathbb{ I}}
\def\R{\mathbb{ R}}
\title{On the $\ell_1$-Norm Invariant Convex $k$-Sparse Decomposition
of Signals}
\author{
 Guangwu Xu \thanks{Department of EE \& CS, University of Wisconsin-Milwaukee,
Milwaukee, WI 53211, USA; e-mail: {\tt gxu4uwm@uwm.edu}. Research
supported in part by the National 973 Project of China (No.
2013CB834205).}  and
Zhiqiang Xu \thanks{Inst. Comp. Math., Academy of Mathematics and Systems Science,
Chinese Academy of Sciences, Beijing, China;
e-mail: {\tt xuzq@lsec.cc.ac.cn}.      Zhiqiang Xu was supported  by NSFC grant 11171336, 11331012, 11021101 and National Basic Research
Program of China (973 Program 2010CB832702).}
}
\date{}
\begin{document}
\maketitle

\begin{abstract}
Inspired by an interesting idea of Cai and Zhang, we formulate
and prove the  convex $k$-sparse decomposition
of vectors that is invariant with respect to the $\ell_1$ norm.
This result fits well
in discussing compressed sensing problems under RIP, but we believe it
also has independent interest. As an application, a simple derivation of the RIP
recovery condition $\delta_k+\theta_{k,k} < 1$ is presented.
\end{abstract}

\noindent{\bf Keywords:\/} Convex $k$-sparse decomposition,
$\ell_1$ minimization, restricted isometry property, sparse recovery.

\section{Introduction}

The {\sl Restricted Isometry Property (RIP)} of Cand\`es and Tao \cite{CanTao05}
is one of the most commonly used frameworks for sparse recovery via $\ell_1$
minimization. For an $n\times p$ matrix $\Phi\in \R^{n\times p}$ and an integer $k$, $1\le k \le p$,
the {\it $k$-restricted isometry constant} $\delta_k$ is the smallest constant such that
\[
\sqrt{1-\delta_k}\|c\|_2 \le \|\Phi c\|_2 \le \sqrt{1+\delta_k}\|c\|_2
\]
for every $k$-sparse vector $c$ (namely, $c$ has at most $k$ nonzero components). If $k+k'\le p$, the
{\it $k, k'$-restricted orthogonality constant} $\theta_{k,k'}$,
is the smallest number that satisfies
\[
|\langle \Phi c, \Phi c'\rangle| \le \theta_{k, k'}\|c\|_2\|c'\|_2,
\]
for all $k$-sparse vector $c$ and  $k'$-sparse vector $c'$ with
disjoint supports.

It has been shown that $\ell_1$ minimization can recover a sparse
signal with a small or zero error under various conditions on
$\delta_k$ and $\theta_{k,k'}$, such as the condition
$\delta_k+\theta_{k,k}+\theta_{k,2k}<1$ in Cand\`es
and Tao \cite{CanTao05}, and
the condition $\delta_{k}<0.307$ of Cai, Wang and Xu \cite{CMX2}.
Recently, Cai and Zhang \cite{CZ} established a sharp condition on $\delta_k$  for
$k$-sparse recovery:
\[
\delta_{k}<\frac{1}3.
\]
In the same paper, they also proved that $\delta_{2k}<\frac{1}2$ is sufficient for $k$-sparse signal reconstruction.
Cai and Zhang developed a marvelous technique in the proof of their
results.

Inspired by the {\sl division lemma} of Cai and  Zhang \cite{CZ}, we formulate and prove the {\sl $\ell_1$-norm invariant
convex $k$-sparse decomposition} of vectors in this note. This result (Theorem
\ref{thm:2.1}) asserts that every vector is a convex combination of
 $k$-sparse vectors with invariant $\ell_1$ norm. Such decomposition fits well in treating compressed sensing
problems under RIP, as a tighter conversion between $\ell_1$-norm and $\ell_2$ norm is desired.
We shall demonstrate this by showing how to use the decomposition to
derive the sparse recovery condition $\delta_k+\theta_{k,k}<1$ of Cai and
Zhang \cite{CZ1} in a simple manner.
However, we believe
that this result is of independent interest for other applications.

After the early appearance of this note  (arXiv:1305.6021, May 2013), we learned that Cai and
Zhang \cite{CZ2} also established a similar decomposition and
using it to derive some good RIP conditions (e.g., $\delta_{2k}\le \frac{1}{\sqrt{2}}$). Using
the $\ell_1$-norm invariant convex $k$-sparse decomposition, under the tight frame sparsification,
Baker \cite{Baker} obtained the condition $\delta_{2k}\le \frac{1}{\sqrt{2}}$ for the
Dictionary-Restricted Isometry Property.

The paper is organized as follows. Section 2 presents the $\ell_1$-norm invariant
$k$-sparse convex decomposition. As a consequence of this decomposition, we prove a useful result for comparing $\ell_p$ norms. In Section 3, the $k$-sparse convex decomposition is used to
give a simple derivation of the sparse recovery condition $\delta_k+\theta_{k,k}<1$ of Cai and Zhang.

\section{Convex $k$-Sparse Decomposition}
In this section, we prove that every vector is a convex combination of
$k$-sparse vectors with invariant $\ell_1$ norm.  The formulation is inspired by the celebrated ideas from Cai and Zhang \cite{CZ}.
 We also show that the
$\ell_{\infty}$ norm of the summand vectors is well behaved. More
specifically, we have
\begin{theorem}\label{thm:2.1}
For positive integers $k\le n$, and positive constant $C$,
let $\vv\in \R^{n}$ be a vector with
\[
\|\vv\|_1\le C \quad\quad \mbox{ and} \quad\quad  \|\vv\|_{\infty}\le \frac{C}k .
\]
Then there are $k$-sparse vectors $\vw_1,\dots, \vw_M$ with
\begin{equation}\label{L1Inv}
\|\vw_t\|_1=\|\vv\|_1 \quad \mbox{ and} \quad \|\vw_t\|_{\infty}\le \frac{C}k \quad \mbox{ for } t=1,\cdots, M,
\end{equation}
such that
\begin{equation}\label{k-sparse}
\vv = \sum_{t=1}^M x_t \vw_t
\end{equation}
for some nonnegative real numbers $x_1, \dots, x_M$ with $\sum_{t=1}^Mx_t=1$.
\end{theorem}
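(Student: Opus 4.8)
The plan is to realise $\vv$ as a point of a carefully chosen polytope all of whose \emph{vertices} are automatically $k$-sparse and automatically satisfy (\ref{L1Inv}), and then invoke the elementary fact that every point of a bounded polytope is a convex combination of finitely many of its vertices. Throughout put $s:=\|\vv\|_1$. First a reduction: the $\ell_1$- and $\ell_\infty$-norms and the property of being $k$-sparse are unchanged if we replace a vector by the vector of absolute values of its coordinates, and also if we flip the sign of any coordinate. Hence it suffices to prove the theorem when $\vv\ge 0$ coordinatewise: given a decomposition $(|v_1|,\dots,|v_n|)=\sum_t x_t\vw'_t$ from that case (after discarding terms with $x_t=0$ so that all $x_t>0$), one checks that $(\vw'_t)_i=0$ whenever $v_i=0$, so defining $\vw_t$ by $(\vw_t)_i=\mathrm{sgn}(v_i)\,(\vw'_t)_i$ yields $k$-sparse vectors with the same $\ell_1$- and $\ell_\infty$-norms as $\vw'_t$ and with $\vv=\sum_t x_t\vw_t$.

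So assume $\vv\ge 0$ and consider
\[
Q:=\Big\{\vw\in\R^n:\ 0\le w_i\le \tfrac Ck \ (1\le i\le n),\ \ \sum_{i=1}^n w_i=s\Big\}.
\]
This is a nonempty compact polytope, and the hypotheses $\vv\ge 0$, $\|\vv\|_\infty\le C/k$, $\|\vv\|_1=s$ say exactly that $\vv\in Q$. Every $\vw\in Q$ has $\|\vw\|_\infty\le C/k$ and, being nonnegative, $\|\vw\|_1=\sum_i w_i=s=\|\vv\|_1$. Thus, once we know the vertices of $Q$ are $k$-sparse, writing $\vv$ as a convex combination $\vv=\sum_{t=1}^M x_t\vw_t$ of vertices $\vw_1,\dots,\vw_M$ of $Q$ (Minkowski's theorem; applying Carath\'eodory inside the hyperplane $\sum_i w_i=s$ one may even take $M\le n$) gives precisely (\ref{k-sparse}) together with (\ref{L1Inv}).

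It remains to show that each vertex $\vw$ of $Q$ is $k$-sparse. Partition the indices into $Z=\{i:w_i=0\}$, $F=\{i:w_i=\tfrac Ck\}$, $B=\{i:0<w_i<\tfrac Ck\}$. At a vertex of a polytope the normals of the active constraints span $\R^n$; here these normals are the $\ve_i$ with $i\in Z\cup F$ together with the all-ones vector coming from $\sum_i w_i=s$, and spanning $\R^n$ forces $|B|\le 1$. Set $b=|F|$. Since $w_i\ge 0$ for all $i$, $s=\sum_i w_i\ge b\cdot\tfrac Ck$, hence $b\le ks/C\le k$, using the hypothesis $s=\|\vv\|_1\le C$. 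If $b<k$, then $\vw$ has at most $|F|+|B|=b+|B|\le b+1\le k$ nonzero coordinates. If $b=k$, then $s\ge \sum_{i\in F}w_i=C$, so $s=C$, and then $\sum_{i\notin F}w_i=s-C=0$ with all $w_i\ge 0$ forces $w_i=0$ for $i\notin F$; thus $B=\emptyset$ and $\vw$ has exactly $k$ nonzero coordinates. In either case $\vw$ is $k$-sparse.

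The step carrying the content is this vertex count, and the point to watch is the interplay of the three defining features of $Q$: the box constraint, the sum constraint, and — crucially — the bound $s\le C$. It is exactly the inequality $s\le C=k\cdot\frac Ck$ that rules out a vertex with $k+1$ nonzero coordinates; without it the identical construction would only deliver a $(k+1)$-sparse decomposition, which is the reason the hypothesis $\|\vv\|_\infty\le C/k$ (so that $C/k$, not $s/k$, is the right box height) is paired with $\|\vv\|_1\le C$. Everything else — membership $\vv\in Q$, the trivial norm identities on $Q$, the sign reduction, and "a point of a polytope is a convex combination of its vertices" — is routine bookkeeping.
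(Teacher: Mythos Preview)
Your proof is correct and takes a genuinely different route from the paper. The paper argues by an explicit inductive construction: assuming (after the same sign reduction) that $\vv$ has $n$ positive coordinates, it builds $k+1$ vectors $\vg_0,\dots,\vg_k$ each supported on $n-1$ coordinates, still satisfying (\ref{L1Inv}), and exhibits $\vv$ as a convex combination of them; then it recurses on each $\vg_t$ until $k$-sparsity is reached. Your argument instead identifies the problem as membership in the polytope $Q$ and pushes all the work into a clean vertex analysis, invoking Minkowski/Carath\'eodory for the convex decomposition. What the paper's approach buys is constructivity: it yields an explicit (if expensive, $O(k^{n-k})$) algorithm and the concrete form of the $\vw_t$, which is why the authors can remark that $M=\binom{n}{k}$. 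What your approach buys is brevity and a much better quantitative conclusion: Carath\'eodory in the hyperplane $\sum_i w_i=s$ gives $M\le n$, dramatically smaller than $\binom{n}{k}$, and in fact answers in part the paper's own remark about finding more efficient decompositions.
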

\begin{proof}
If $k=n$, or $\vv$ is already $k$-sparse,
then there is nothing to do. Assume now $n>k$.
Without loss of generality, we may consider the case that all components of $\vv$ are positive\footnote{In this note, we will
treat a vector of $\R^n$ as a function from $\{1,\cdots, n\}$ to $\R$.} (the general case
 can be argued easily, as (\ref{k-sparse}) still holds by multiplying $-1$ to the $i$th components of both sides )
\[
\vv(1)\ge \vv(2)\ge \cdots \ge \vv(n)>0.
\]

For each $j=1,\dots,k$, let $\eta_j := \frac{C}k - \vv(j)$. Since $\sum_{j=1}^k \eta_j=C-\vv(1)-\cdots-\vv(k)\ge \vv(k+1)+\cdots+\vv(n)$, so
we have $\eta_j\ge 0$ and $\sum_{j=1}^k\eta_j>0$.
Let
\[
\lambda_i := \frac{\eta_i}{\sum_{j=1}^k \eta_j}, \qquad i = 1, 2, \ldots, k,
\]
Then $\sum_{i=1}^k\lambda_i=1$.

We shall construct $k+1$ vectors $\vg_0, \dots, \vg_k$;
each has $n-1$ nonzero components and
satisfies
$$
\|\vg_t\|_1=\|\vv\|_1 \quad \mbox{ and} \quad \|\vg_t\|_{\infty}\le \frac{C}k \quad \mbox{ for } t=0, 1,\ldots, k.
$$
 Furthermore, $\vv$ is a convex combination of  $\vg_0, \dots, \vg_k$.
In the following construction, we will use $\vv_{\{j\}}$ to denote the vector
whose $j^{th}$ component is $\vv(j)$ and other components are zero, and  use $\{\ve_1, \ldots, \ve_n\}$
to denote the canonical basis. The $k+1$ vectors $\vg_0, \dots, \vg_k$ are
\begin{eqnarray*}
&&
\vg_0=\sum_{j=1}^{k}(\vv(j)+\lambda_j\vv(n))\ve_j+\vv_{\{k+1\}}+\cdots+\vv_{\{n-1\}},\\
&&\vg_t=\sum_{\scriptstyle j=1\atop\scriptstyle j\neq t}^{k}(\vv(j)+\lambda_j\vv(n))\ve_j+(\vv(t)+\lambda_t\vv(n))\ve_n+\vv_{\{k+1\}}+\cdots+\vv_{\{n-1\}}, 1\leq t\leq k.\\
\end{eqnarray*}
Let
$$y_1:=\frac{\lambda_1\vv(n)}{\vv(1)+\lambda_1\vv(n)},\,
y_2:=\frac{\lambda_2\vv(n)}{\vv(2)+\lambda_2\vv(n)},\, \dots,
y_k:=\frac{\lambda_k\vv(n)}{\vv(k)+\lambda_k\vv(n)}.
$$
Then we see that $y_i< \lambda_i$. So, by setting $y_0=1-y_1-\cdots-y_k$, we have
$y_0> 0$ and
\[
y_0+y_1+\cdots+y_k=1.
\]
It is also straightforward to verify
\[
\vv = y_0\vg_0+y_1\vg_1+\cdots+y_k\vg_k.
\]
For example, the first component of $y_0\vg_0+y_1\vg_1+\cdots+y_k\vg_k$
is
\begin{eqnarray*}
&& y_0(\vv(1)+\lambda_1\vv(n))+y_2
(\vv(1)+\lambda_1\vv(n))+\cdots+y_k(\vv(1)+\lambda_1\vv(n))\\
&& \quad =\left(\vv(1)+\lambda_1\vv(n)\right)
(1-y_1) = \vv(1).
\end{eqnarray*}
The other requirements for $\vg_t\quad (t=0, 1, \dots, k$) are
\begin{enumerate}
\item $\|\vg_t\|_1=\|\vv\|_1$. This is certainly true.
\item $|\vg_t(i)|\le \frac{C}k$. To see this, we note that
\[
\vg_t(i) =\left\{
\begin{array}{ll} 0 &\mbox{ if } i=t \mbox{ or if } t=0, i=n;\\
\vv(i)+\lambda_i\vv(n) & \mbox{ if } 1\le i \le k
             \mbox{ and } i\neq t ;\\
\vv(i) & \mbox{ if } k<i<n;\\
\vv(t)+\lambda_t\vv(n)  & \mbox{ if } t>0, i=n.
\end{array}\right.
\]
For $1\le i \le n$,
\begin{eqnarray*}
\frac{C}k - |\vg_t(i)| &\ge& \frac{C}k -
(\vv(i)+\lambda_i\vv(n))=\eta_i - \lambda_i\vv(n)\\
&=&
\eta_i-\frac{\eta_i\vv(n)}{\sum_{j=1}^k\eta_j}
\ge
\eta_i-\frac{\eta_i\vv(n)}{\vv(k+1)+\cdots+\vv(n)}\ge 0.
\end{eqnarray*}
\end{enumerate}

If $n-1>k$, we repeat this process for each $\vg_t$, and so on, until a
$k$-sparse convex combination is reached.
\end{proof}
\begin{remark}
The proof of Theorem \ref{thm:2.1} in fact presents a method to construct the vectors ${\bf w}_t, t=1,\ldots,M$ with $M={n\choose k}$. Using this method, the time complexity to construct the $M$ vectors is $O(k^{n-k})$. It will be interesting to design efficient algorithms to construct the vectors.
\end{remark}

\section{RIP Conditions in Compressed Sensing}

In this section, we shall use the $k$-sparse convex decomposition
to describe a short proof of the following results of Cai and Zhang
\cite{CZ1}. Our proof follows an approach similar to that
in \cite{CMX1,CMX2}. We first consider the recovery of $k$-sparse signal:
\begin{theorem}\label{thm:3.1}
Let $\beta$ be a $k$-sparse signal and $y=\Phi \beta$ where $\Phi$ satisfies
\begin{equation}\label{eq:3.1}
\delta_k +\theta_{k,k}< 1.
\end{equation}
Let \[
\hat \beta = \argmin_{\gamma \in \R^p} \{\|\gamma\|_1 \; \quad \mbox{\rm
  subject to } \quad y = \Phi\gamma\}
\]
Then $\beta=\hat\beta$.
\end{theorem}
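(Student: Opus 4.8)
The plan is to follow the standard null-space / error-vector argument but to replace the usual crude Cauchy--Schwarz bound on the "tail" of the error by the convex $k$-sparse decomposition of Theorem \ref{thm:2.1}. Write $h = \hat\beta - \beta$ and let $T_0 = \mathrm{supp}(\beta)$, so $|T_0|\le k$. Since $\hat\beta$ is feasible and minimizes the $\ell_1$ norm, $\|\hat\beta\|_1 \le \|\beta\|_1$, and a routine manipulation gives the cone condition $\|h_{T_0^c}\|_1 \le \|h_{T_0}\|_1$. Also $\Phi h = 0$. The goal is to show $h = 0$.

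The key step is to handle $h_{T_0^c}$, the part of $h$ off the support. First I would observe that $\vv := h_{T_0^c}$ satisfies $\|\vv\|_1 \le \|h_{T_0}\|_1 =: C$; however, to apply Theorem \ref{thm:2.1} I also need $\|\vv\|_\infty \le C/k$, which need not hold as stated. The standard fix (going back to the Cai--Zhang division idea) is to first peel off the $k$ largest entries of $h_{T_0^c}$: let $T_1$ index those entries, so $\|h_{T_1}\|_\infty \le \|h_{T_0^c}\|_1/k \le C/k$ after one checks the sorting inequality, and then $h_{(T_0\cup T_1)^c}$ has $\ell_\infty$ norm at most $\|h_{T_1}\|_1/k$; iterating or arguing directly, the "very tail" piece $\vv = h_{(T_0\cup T_1)^c}$ satisfies both $\|\vv\|_1 \le \|h_{T_1}\|_1 \le C$ and $\|\vv\|_\infty \le C/k$. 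Applying Theorem \ref{thm:2.1} with this $C$, write $\vv = \sum_{t=1}^M x_t \vw_t$ with each $\vw_t$ being $k$-sparse, $\|\vw_t\|_1 = \|\vv\|_1$, and $x_t \ge 0$, $\sum_t x_t = 1$. Then $h = h_{T_0\cup T_1} + \sum_t x_t \vw_t$, so by $\Phi h = 0$ and linearity,
\[
0 = \|\Phi h\|_2^2 = \langle \Phi h_{T_0\cup T_1} + \textstyle\sum_t x_t \Phi\vw_t,\ \Phi h\rangle,
\]
and I would instead pair $\Phi h$ against $\Phi h_{T_0}$, $\Phi h_{T_1}$ and each $\Phi \vw_t$ separately, using $0 = \langle \Phi h, \Phi h_{T_0}\rangle$ etc.

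From here the estimates are routine: using the RIP lower bound on the $k$-sparse vectors $h_{T_0}$, $h_{T_1}$ (or their union as a $2k$-sparse vector, split appropriately) together with the restricted orthogonality constant $\theta_{k,k}$ to bound cross terms $|\langle \Phi h_{T_0}, \Phi \vw_t\rangle| \le \theta_{k,k}\|h_{T_0}\|_2\|\vw_t\|_2$ and $|\langle \Phi h_{T_0}, \Phi h_{T_1}\rangle|\le \theta_{k,k}\|h_{T_0}\|_2\|h_{T_1}\|_2$, and finally converting $\|\vw_t\|_2 \le \sqrt{k}\,\|\vw_t\|_\infty$-type bounds back into $\ell_1$ quantities via the $\ell_1$-norm invariance $\|\vw_t\|_1 = \|\vv\|_1$ and $\sum_t x_t = 1$. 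Collecting terms and using the cone condition $\|h_{T_0^c}\|_1 \le \|h_{T_0}\|_1$ yields an inequality of the shape $(1-\delta_k-\theta_{k,k})\,\|h_{T_0}\|_2^2 \le 0$, which forces $h_{T_0} = 0$ under \eqref{eq:3.1}, and then the cone condition forces $h = 0$, i.e. $\hat\beta = \beta$.

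The main obstacle I anticipate is bookkeeping the constants so that exactly the combination $\delta_k + \theta_{k,k}$ (and nothing worse, like an extra $\theta_{k,2k}$ term) appears: this is precisely where the $\ell_1$-norm invariance of the decomposition earns its keep, since each $\vw_t$ is genuinely $k$-sparse (so only $\delta_k$ and $\theta_{k,k}$ are invoked, never a constant at level $2k$) while the $\ell_1$ mass is preserved so the convex weights $x_t$ telescope cleanly against $\sum_t x_t = 1$. The secondary technical point is the initial reduction ensuring the $\ell_\infty/\ell_1$ hypothesis of Theorem \ref{thm:2.1} holds; this is the "shifting" step and must be done carefully so that the discarded pieces $h_{T_0}$, $h_{T_1}$ remain controlled $k$-sparse vectors.
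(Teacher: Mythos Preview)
Your overall strategy is right, but the ``peeling'' step introduces a genuine obstruction that you do not overcome. By splitting the head into $h_{T_0}$ and $h_{T_1}$ you create, when pairing $\Phi h$ against $\Phi h_{T_0}$, an additional cross term $\langle \Phi h_{T_0},\Phi h_{T_1}\rangle$ of size up to $\theta_{k,k}\|h_{T_0}\|_2\|h_{T_1}\|_2$. After bounding the tail pieces $\vw_t$ by $\|\vw_t\|_2\le \|h_{T_0}\|_1/\sqrt{k}\le \|h_{T_0}\|_2$ you obtain at best
\[
(1-\delta_k)\|h_{T_0}\|_2 \le \theta_{k,k}\bigl(\|h_{T_0}\|_2+\|h_{T_1}\|_2\bigr),
\]
which does \emph{not} collapse to $(1-\delta_k-\theta_{k,k})\|h_{T_0}\|_2^2\le 0$; the surviving $\|h_{T_1}\|_2$ term cannot be absorbed without a further hypothesis (and treating $h_{T_0\cup T_1}$ as one block would invoke $\delta_{2k}$, which is exactly what you want to avoid). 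Incidentally, the claim $\|h_{T_1}\|_\infty\le \|h_{T_0^c}\|_1/k$ is false in general: $T_1$ contains the \emph{largest} entries of $h_{T_0^c}$, so there is no such bound on $\|h_{T_1}\|_\infty$; only $\|h_{(T_0\cup T_1)^c}\|_\infty\le \|h_{T_1}\|_1/k$ holds.

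The paper sidesteps this entirely by a single clean move: instead of setting $T_0=\mathrm{supp}(\beta)$, it lets $T$ be the indices of the $k$ largest-magnitude entries of $h$ itself. Then automatically $\|h_{T^c}\|_\infty\le \|h_T\|_1/k$ (the $(k{+}1)$st largest is at most the average of the top $k$), and the cone inequality $\|h_{T^c}\|_1\le \|h_T\|_1$ still holds because $T$ maximizes $\|h_T\|_1$ over $k$-subsets and the usual argument already gives $\|h_{T_0^c}\|_1\le \|h_{T_0}\|_1$. Theorem~\ref{thm:2.1} then applies directly to $h_{T^c}$, there is only one $k$-sparse head $h_T$, and the identity $\|\Phi h_T\|_2^2=-\langle\Phi h_T,\Phi h_{T^c}\rangle$ yields exactly $(1-\delta_k)\|h_T\|_2^2\le\theta_{k,k}\|h_T\|_2^2$. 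Replacing your $T_0,T_1$ split by this single top-$k$ set $T$ is the missing idea.
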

\begin{proof}
Let $h:=\hat \beta -\beta$. We need to show that $h=0$. Otherwise, we  assume $|h(1)|\ge |h(2)|\ge  \cdots \ge  |h(p)|>0$. Denote
\[
T:=\{1, 2, \cdots,  k\}, \quad S:=\{k+1,k+2, \cdots, p\},
\]
then as in \cite{CMX2}, the minimality of $\hat\beta$ yields
\[
\|h_{S}\|_1\leq \|h_{T}\|_1 ,
\]
where $h_Q=h\I_Q$ and $\I_Q$ denotes the indicator function of the set $Q$ (namely,
$\I_Q(j) =1$ if $j\in Q$ and 0 if $j\notin Q$).

From the assumption, we also have $|h_S(j)|\le \frac{\|h_T\|_1}k$  for all $j\in S$, i.e.
\[
\|h_S\|_{\infty} \le \frac{\|h_T\|_1}k .
\]
Therefore, by theorem \ref{thm:2.1}, $h_S$ can be written as
\[
h_{S}=\sum_{j=1}^M x_j \vw_j \quad \quad \mbox{ where }x_j\ge 0 \mbox{ and } \sum_{j=1}^q x_j = 1,
\]
with each $\vw_j$ is $k$-sparse and supported on $S$, and $\|\vw_j\|_1=\|h_S\|_1,
\|\vw_j\|_{\infty}\le \frac{\|h_T\|_1}k$. As $h_T$ and $\vw_j$ have disjoint supports
and $\|\vw_j\|_2\le \sqrt{k\big(\frac{\|h_T\|_1}k\big)^2}\le \|h_T\|_2$, we get
\begin{eqnarray*}
(1-\delta_k)\|h_{T}\|_2^2 &\le & \|\Phi h_{T}\|_2^2 = |\langle \Phi h_{T},
\Phi h_{S}\rangle|\le
                          \sum_{j=1}^M x_j|\langle \Phi h_{T}, \Phi \vw_j\rangle|\\
 &\le & \sum_{j=1}^M x_j\theta_{k,k}\|h_T\|_2\|\vw_j\|_2\le \sum_{j=1}^M x_j\theta_{k,k}\|h_{T}\|_2\|h_{T}\|_2\\
 &=&\theta_{k,k}\|h_{T}\|_2^2.
 \end{eqnarray*}
We have reached a contradiction. Hence $h=0$.
\end{proof}

\begin{remark}
We state the proof for the $k$-sparse signal. In fact, one also can extend the proof to  the noise case easily. In \cite{Baker},
Baker stated such a proof for the case where the signals are sparse in a redundant dictionary.
\end{remark}

\end{document}